\providecommand{\U}[1]{\protect\rule{.1in}{.1in}}
\newcommand{\bee}{\begin{enumerate}}
\newcommand{\eee}{\end{enumerate}}
\newcommand{\bei}{\begin{itemize}}
\newcommand{\eei}{\end{itemize}}
\newcommand{\beq}{\begin{eqnarray} \begin{aligned}}
\newcommand{\eeq}{\end{aligned} \end{eqnarray} }
\newcommand{\bea}{\begin{array}}
\newcommand{\eea}{\end{array}}
\newcommand{\ben}{\begin{eqnarray}}
\newcommand{\een}{\end{eqnarray} }
\newtheorem{theorem}{Theorem}
\newtheorem{definition}[theorem]{Definition}
\newenvironment{proof}[1][Proof]{\noindent\textbf{#1.} }{\ \rule{0.5em}{0.5em}}
\def\tr{{\rm Tr}}
\def\ot{\otimes}
\def\ep{\epsilon}
\def\<{\langle}
\def\>{\rangle}
\def\hcal{{\cal H}}
\def\pcal{{\cal P}}
\def\classops{{\cal C}}
\def\useless{{\cal S}}
\def\boundset{{\cal S_i}}
\def\id{\mathds{1}}
\newcommand{\be}{\begin{eqnarray} \begin{aligned}}
\newcommand{\ee}{\end{aligned} \end{eqnarray} }
\begin{document}
\title{(Quantumness in the context of) Resource Theories}
\author{Micha\l~Horodecki}
\affiliation{Institute for Theoretical Physics and Astrophysics, University of Gda\'nsk,  Gda\'nsk, Poland}
\affiliation{National Quantum Information Centre of Gda\'nsk, Sopot, Poland}
\author{Jonathan~Oppenheim}
\affiliation{Department of Physics and Astronomy, University College London, London, United Kingdom}
\keywords{one two three}

\begin{abstract}
We review the basic idea behind resource theories, where we quantify quantum resources by specifying a restricted class of operations.
This divides the state space into various sets, including states which are free (because they can be created under the
class of operations), and those which are a resource (because they cannot be).  One can quantify
the worth of the resource by the relative entropy distance to the set of free states, and under certain conditions, this is a unique 
measure which quantifies the rate of state to state transitions. 
The framework includes entanglement, asymmetry and purity theory.  It also includes thermodynamics, which is a hybrid resource theory
combining purity theory and asymmetry.  Another hybrid resource theory which merges purity theory and
entanglement can be used to study quantumness of correlations and discord, and we present quantumness in this more 
general framework of resource theories.  
\end{abstract}
\volumeyear{year}
\volumenumber{number}
\issuenumber{number}
\eid{identifier}
\date{\today}




\maketitle

\section{Introduction}
Quantum information theory is novel in that it seeks to apply techniques from computer science to understand the laws of physics.  The types of questions one asks are often different to the questions physicists used to ask.  We are interested in finding out what is possible given some resources or some class of operations.  We explore the scope of physical law by asking what tasks the theory allows us to perform.  We place some restriction, or enlarge the types of operations one can perform, and ask what is possible.  We can even imagine modifying the laws of physics, and ask what new tasks we could perform given this modification.

This has led to the notion of a
{\it quantum resource theory}\cite{thermo-ent2002,devetak-framework}, which are specified by a class of operations $\classops$.  The operations are often a restriction on the full set of quantum operations that can be implemented. Given this restriction, it will not be possible to create all possible quantum states from some 
fixed initial state, and as a result, the states
which cannot be created under the class of operations, will be a resource, since they cannot be created for free.
There will also be a set of states $\useless$ which can be created under the class of operations $\classops$.  We call this the set of 
 {\it free states}, because if you are given one, 
it's not particularly valuable -- you could anyway create them under the class $\classops$.  
On the other hand, the states which cannot be created by means of $\classops$ then naturally acquire some value.
If the parties facing the restriction acquire such a quantum state outside the restricted set, then they can use this state to implement measurements and transformations that they could not otherwise implement under the restriction, 
consuming the state in the process. For example, with the aid of some resource states, one might be able to fully implement the full set 
of quantum operations.  Therefore, such states become a resource.  The class of operations might also induce other sets of states $\boundset$ which are closed, in the sense that if you are given states from this set, then under the class of operations, you cannot move outside this set. 
We call such sets {\it bound sets}.

Some examples serve to illustrate the idea:
{\bf the resource theory of entanglement} --
if two or more parties are restricted to communicating classically and implementing local quantum operations (LOCC), then entangled states become a resource~\cite{BBPS1996, BBPSSW1996, horodecki_quantum_2009}.  The set $\useless$ are the 
separable states, since these can be created under the class of LOCC. A pure entangled state such as a Bell state,
cannot be created by means of LOCC, but if one is given such a resource, then one can perform tasks such as teleportation which cannot be implemented using operations from within the class of LOCC.  One can also use entangled states to create other entangled states which are outside the set of separable states.  There exist other bound sets, for example, states which have positive partial transpose, since if you are given states from this set, you cannot create a state outside this set under LOCC.

A particularly simple example, and one which we'll discuss in more detail here, is {\bf the theory of purity}: if a party is restricted to preparing states that are completely mixed and performing unitary
operations, then any state that is not completely mixed, i.e. any state that has some purity, becomes a resource~\cite{horodecki_reversible_2003},
because it is impossible to create such states under the class of operations. 
The set of free states is here just the maximally mixed state, almost by definition, 
because the class of operations allows one to prepare it, but nothing else.  On the other hand, pure states are a resource which would allow you to create other states which are not maximally mixed.  It turns out that purity theory is related to both thermodynamics and quantumness of correlations, and we 
shall discuss it in more detail in Section \ref{sec:purity}.

{\bf Asymmetry theory} is the resource theory where we restrict ourselves to quantum operations that have a particular symmetry.  States that break this symmetry then become a resource~\cite{janzing_quasi-order_2003,gour_resource_2008,marvian_pure_2011}.  For example, if one can prepare energy eigenstates,
but  operations have to conserve energy i.e. respect time-translation invariance), then the free states are mixtures of energy levels.  On the other hand
coherent superpositions over energy levels cannot be created under the class of operations, and are therefor a resource.

One typically thinks of {\bf Thermodynamics} as a theory where one cannot access microscopic degrees of freedom, but only macroscopic ones (such
as energy, volume, pressure).  In fact, being able to access microscopic degrees of freedom doesn't change the structure of the theory,
as can be seen once thermodynamics is cast as a resource theory.  One can describe 
Thermodynamics is a mixture of the resource theory of purity, and the resource theory of asymmetry (in this case, energy conservation)\cite{janzing_thermodynamic_2000,thermoiid, horodecki-singleshot}.  Like purity theory, we can only 
perform unitary operations (including those that manipulate microscopic degrees of freedom), but they also must conserve energy.  Instead of
the free resource state being the maximally mixed state, we take it to be the thermal state at temperature $T$.  
Any state which is not thermal, and is thus not in equilibrium, is therefore a resource which can be used to perform tasks such as
extracting work.  One immediately sees the advantage of casting thermodynamics as a resource theory, since having cast it as such, one can apply
it to single quantum systems, or highly correlated states~\cite{horodecki-singleshot}, as we discuss in Section \ref{sec:purity}.  In such cases, one finds
that there are two free energies, not just one, which govern the theory.

Another hybrid resource theory, and a central subject of this note, is the theory of {\bf local purity}, and this theory is related to {\bf quantumness of correlations}.  We will see that it is 
a combination of purity theory and entanglement theory.  It is an interesting resource theory to consider, in part because it has many features 
which differentiate it from other resource theories we know.  We will discuss this resource theory in Section \ref{sec:deficit}, and it's relation to the
deficit\cite{OHHH2001,nlocc} and discord\cite{HendersonVedral,Zurek01-discord}, both of which are measures of quantumness of correlations.

\begin{table}
\label{tab:var}
\begin{center}
\begin{tabular}{|c|c|c|c|c|}
\hline
Theory & Class of Operations $\classops$ & Free Set $\useless$ & Monotones & Hybrid Theory\\
\hline\hline
Entanglement & LOCC & Separable states $\sigma$ & $\inf_\sigma R(\rho||\sigma)$ & \multirow{2}{*}{Quantumness}\\
\cline{1-4}
Purity & Noisy Ops ($U$, $\id / d$) &  $\id / d$ & $R(\rho||\id/d)=\log d - S(\rho)$ & \multirow{2}{*}{Thermodynamics}\\
\cline{1-4}
Asymmetry & ${\cal E}$ s.t. $T(g)\circ{\cal E}\circ T^\dagger(g)={\cal E}$ & $\sigma$ s.t. $\sigma=T(g)\sigma T^\dagger(g)$ &  $\inf_\sigma R(\rho||\sigma)$ &\\
\hline
\end{tabular}
\end{center}
\caption{As discussed in Section \ref{sec:rel}, the relative entropy distance to the free states $\useless$ is a measure of the resource, in most resource theories.  
It is defined as $\inf_{\sigma\in\useless} R(\rho|\sigma)$ with  $R(\rho|\sigma)=\tr \rho\log\rho -\tr\rho\log\sigma$.  Under some very general conditions, 
it is the unique measure in a theory, and gives the rate at  which states can be interconverted.  The table above depicts three resource theories, and
two other resource theories (Thermodynamics and Quantumness), which can be constructed by combining the three basic ones. 
}
\end{table}

We thus have a number of interesting resource theories, and they tend to have some very common properties.
The examples above demonstrate that studying a given property of quantum states as a resource is a very efficient way of coming to understand it better.  
 The approach is also rather general: rather than considering the behavior of the property of interest for some particular system with particular dynamics, one considers instead the fundamental limits that are imposed by the restriction defining the resource and \emph{the laws of quantum theory}. On the practical side, a better understanding of a given resource helps to determine how best to implement the tasks that make use of it, and, more fundamentally, such an understanding may serve to clarify what sorts of resources are even relevant for a given task. Finally, a resource theory approach can provide a framework for organizing, synthesizing, and consolidating the
results in a given field.

Typically, resource theories provide answers to questions such as:
how does one measure the quality of different resource states? Can one particular resource state be
converted to another deterministically and if so, what is the rate of interconversion. Or can resource states be converted to 
each other indeterministically, and if so, what is the probability of failure? What about if one has
access to a catalyst? One is generally interested in identifying the equivalence classes of states that are reversibly interconvertible asymptotically.

In general, there is a lot of commonality between various resource theories.  For example, for any resource theory, there is  
a quantity -- the relative entropy distance to the set of free states or to any bound set.  This generalises
the notion of the relative entropy of entanglement~\cite{PlenioVedral1998}. It gives 
a measure of how valuable a resource state is (see Table \ref{tab:var}).  
What's more,
under certain conditions, this measure is the unique measure of the worth of the resource\cite{thermo-ent2002}.  It can then be used to determine the
rate at which states can be converted into each other under the class of operations. We will review this in more detail
in Section \ref{sec:rel}.   
In the theory of quantumness of correlations,
it means that the relative entropy of quantumness (introduced in \cite{huge-delta} and rediscovered in \cite{modi-relent-discord,groisman2007quantumness}) is a measure of quantumness.

The central aim of this paper is to review the elements of resource theories, and discuss the notion of quantumness of correlations 
within this framework.  We shall first discuss the simplest resource theory, purity theory, in a bit more detail in Section \ref{sec:purity}, and
we will also discuss it's close connection with thermodynamics, and with the thermodynamics of single systems.  We shall then in Section \ref{sec:rel}
 discuss some general features of resource theories, for example, the central role played by the relative entropy distance, and the notion of
expanded classes of operations.  In Section \ref{sec:deficit}, we review measures of quantumness of correlations, in particular the deficit,
which we then show is related to the resource theory of local purity in Section \ref{sec:def-resource}.  In the Appendix we prove the central theorem
on the role of the relative entropy distance in resource theories.

\section{Purity theory and thermodynamics}
\label{sec:purity}

Purity theory \cite{uniqueinfo} is perhaps one of the simplest of resource theories, and is intimately linked with both thermodynamics
and quantumness of correlations.  We shall therefore briefly review its structure.  In essence, it's the study of entropy, and thus it 
can be viewed as thermodynamics where the Hamiltonian $H=0$.  The class of operations $\classops$ is noisy operations (NO) which consists of 
\bei
\item adding systems in the maximally mixed state $\id/d$
\item partial trace
\item unitary transformation
\eei
The only state which can be created under this class of operations is the maximally mixed state, and so, any other state is a resource.  In fact,
the more pure a state is (in terms of having low entropy), the more valuable it is.  The theory is reversible, in the sense that if one can
transform $\rho^{\otimes n}$ into $\sigma^{\otimes m}$, then one can also transform $\sigma^{\otimes m}$ into $\rho^{\otimes n}$.  As a result, using
a general theorem which we will describe in Section \ref{sec:rel}, the number of copies of $\sigma$ which can be formed per input copy of
$\rho$ i.e. $m/n$ 
is given by $(\log d - S(\sigma))/(\log d - S(\rho))$, and the {\it negentropy}  $\log d - S(\rho)$ is the unique measure of purity.  Since the
negentropy of a pure state is just $\log d$, we can view the negentropy of any other state, as giving the number of pure qubits which can be
extracted, or distilled from a given state $\rho$.  Because the theory is reversible, this is also the number of pure qubits needed to create
the state.  The transition from some mixed state into a pure state, if done in the presence
of a heat bath at temperature $T$, is sometimes called Landauer's erasure, 
and the amount of work required to make this transition is just $kT$ times the necessary
change in negentropy of the state, with $k$ Boltzmann's constant. Thus there is an intimate connection between purity theory, and thermodynamics.  In fact,
as we have stated, purity theory is just thermodynamics, but with $H=0$.

One can also characterise state transformation in the case where one has a single system, rather than a large number of copies of a state.
A necessary and sufficient condition for a transition is given by the majorization condition as shown in \cite{uniqueinfo}, 
and this now called the single-shot scenario.  The case where one allows
a small probability of failure in the single shot case, was given in \cite{dahlsten2011inadequacy}. 

If we now add a Hamiltonian, and must preserve energy, then we have the resource theory of thermodynamics\cite{janzing_thermodynamic_2000,thermoiid}.  
The theory 
is still reversible, and rather than the
negentropy governing state transitions, one then finds that the free energy $F=\tr H\rho - TS(\rho)$ governs state transitions.  The single-shot
case of thermodynamics was considered in \cite{horodecki-singleshot,aberg_truly_2011}.  This is important if we wish to consider the thermodynamics of single systems,
or highly correlated ones.  There one finds that the theory is not reversible, and thus the free energy does not govern state transitions, indeed there are
at least two free energies\cite{horodecki-singleshot}, one which is analogous to the distillable entanglement and one which is analogous to the entanglement of formation (they give
the amount of work which can be extracted from a state, and the amount of work which is needed to perform the reverse process when created a state).

{\it Digression: noisy operations and Asymptotic Birkhoff Conjecture.}
As an aside, it is interesting to see that noisy operations have been recently used to disprove the asymptotic Birkhoff conjecture\cite{birkhoff-conjecture}. 
The conjecture was that the $n$-fold tensor product of a  bistochastic map
can be approximated by maps that are mixtures of unitaries in the limit of large $n$ \cite{svw2005}. 
In \cite{Haagerup-Musat2010} the question was posed for a larger class of operations, that turned out to be 
just the noisy operations \footnote{In the paper by \cite{Haagerup-Musat2010}  a class called factorisable maps was used. 
Andreas Winter (private communication) has noted that the one can use a class of noisy operations,
which is a subclass of factorisable maps (perhaps it is just the same class).}. 
It turned out that then the question easily reduces 
to a question of equality of bistochastic maps and noisy operations on a fixed system (without tensoring).
In \cite{Haagerup-Musat2010} it was shown that it is not true, by showing that noisy operation 
is extremal in the set of trace preserving maps only when it is unitary, 
and at the same time presenting bistochastic maps that are extremal. 
The proof that noisy operations are generically non-extremal is quite involved exploiting tools from 
the field of von Neumann algebras. One expects that a simpler proof can be found.

\section{Meaning of the relative entropy distance}
\label{sec:rel}

Here we will discuss the special role in resource theories, of a quantity known as the relative entropy distance.  
A theory is reversible, if one can go from any resource state to any other state, 
in a reversible manner.
More precisely, for any states $\rho,\sigma$ 
if one can transform $n$ copies of $\rho$ into $m$ copies of $\sigma$ 
in the asymptotic limit of large $m,n$, then one can also transform back the $m$ copies of $\sigma$ 
into $n$ copies of $\rho$. Furthermore, given any two resource states $\rho,\sigma$, there exists some $n,m$ such that
the transition is possible.
In a reversible theory, there is a unique measure on resources, as can be seen by an analogous argument to that used to show
the optimality of the Carnot cycle~\cite{popescu-rohrlich}: roughly speaking, a quantity, that can only decrease under transformations, 
must be unique. 
Now in \cite{thermo-ent2002} it it was pointed out  that this unique quantity must be the relative entropy distance 
from the set of free states  (in a regularized form). 
Before stating the result more precisely, let us set some notation. 

Systems are determined by Hilbert spaces.   Two systems give rise to a larger system, described by the tensor product of the two Hilbert spaces
\footnote{We can have a theory where a basic system is  composite (e.g. bipartite systems). If the original systems were bipartite, the bipartite structure is inherited in a natural way 
in a new system (same for the multipartite case): if for the first system we have $\hcal_A \ot \hcal_B$ 
and for the second one $\hcal_{A'} \ot \hcal_{B'}$  
then the new system is given by $ (\hcal_A\ot \hcal_{A'}) \ot (\hcal_{B}\ot\hcal_{B'}$.}.
For each system we have the set of states acting on its Hilbert space. We distinguish the set $\useless$ of free states,
that is closed under composition of systems described above, i.e. it is closed under tensor product. 
In entanglement theory this is e.g. the set of separable states. But also this is true of the set of PPT states,
or the set of bound entangled states.  We recall that these are called {\it bound sets $\boundset$}.

Our result emphasizing the role of the relative entropy distance will require convexity of this bound set. 
Therefore, the set of bound entangled states cannot be used, provided that NPT bound entangled states 
exist (which is a long standing open problem\cite{horodecki_quantum_2009}, see \cite{npt-few-steps,npt-k-extendible} for recent developments). 


We will present here the following theorem which is a slightly modified version of one given in \cite{thermo-ent2002}. We will later state and prove
a stronger but more technical version of it.  Recall
that the relative entropy is defined as
\be
S(\rho|\sigma)=\tr\rho\log\rho-\tr\rho\log\sigma
\ee
and that the relative entropy distance to a set $\useless$ is
\be
E_r(\rho)=\inf_{\sigma\in\useless}S(\rho|\sigma)
\ee
and the regularisation of this quantity is simply
\be
E_r^\infty(\rho)=\lim_{n\rightarrow\infty}\frac{E_r(\rho^{\otimes n})}{n}.
\ee
Recall also that the rate $R(\rho\to \sigma)$ of transitions between $\rho$ and $\sigma$ is
the number of copies of $\sigma$ which can be produced per input copy $\rho$ in the limit of many copies.
\begin{theorem}
\label{thm:hyperset}
Consider an arbitrary class of operations $\classops$, being a subclass of completely positive maps and assume that
\bei
\item there exists a convex set of states $\useless$, which contains the maximally mixed state 
and is closed under the class of operations.
\item the theory is asymptotically reversible (i.e. all states that lie outside of $\useless$
are asymptotically interconvertible.
\item there exists state $\rho_0$, for which regularized relative entropy distance from the set $\useless$
denoted by $E_r$ is nonzero.
\eei
Then for any two states $\rho$ and $\sigma$ outside of $\useless$, the asymptotic conversion rate is given by the ratio of regularized relative entropy distance from the set:
\be
R(\rho\to \sigma)= \frac{E_r^\infty(\rho)}{E_r^\infty(\sigma)}
\label{eq:relative-rate}
\ee
\end{theorem}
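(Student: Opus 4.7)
The plan is to establish that the regularized relative entropy distance $E_r^\infty$ is a monotone under the class $\classops$, and then combine this monotonicity with the assumed asymptotic reversibility to pin down the rate of transitions.

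First I would prove the single-copy monotonicity $E_r(\Lambda(\rho)) \leq E_r(\rho)$ for every $\Lambda \in \classops$. Two ingredients enter: the hypothesis that $\useless$ is closed under $\classops$, so that $\Lambda(\sigma) \in \useless$ whenever $\sigma \in \useless$, and the data processing inequality $S(\Lambda(\rho)|\Lambda(\sigma)) \leq S(\rho|\sigma)$. Chaining them,
\be
E_r(\Lambda(\rho)) = \inf_{\tau \in \useless} S(\Lambda(\rho)|\tau) \leq \inf_{\sigma \in \useless} S(\Lambda(\rho)|\Lambda(\sigma)) \leq E_r(\rho).
\ee
The regularization $E_r^\infty$ then inherits monotonicity, and the definition makes it multiplicative on tensor powers, $E_r^\infty(\rho^{\otimes k}) = k\, E_r^\infty(\rho)$.

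Next I would convert a rate statement into an inequality for $E_r^\infty$. Suppose $\rho^{\otimes n}$ can be transformed approximately into $\sigma^{\otimes m}$ by some $\Lambda_n \in \classops$ with trace-norm error vanishing in $n$ and $m/n \to R(\rho \to \sigma)$. Monotonicity and an asymptotic-continuity estimate for $E_r$ give
\be
E_r(\rho^{\otimes n}) \geq E_r\!\left(\Lambda_n(\rho^{\otimes n})\right) \geq E_r(\sigma^{\otimes m}) - o(n).
\ee
Dividing by $n$ and taking $n\to\infty$ yields $E_r^\infty(\rho) \geq R(\rho\to\sigma)\,E_r^\infty(\sigma)$, provided the denominator is non-zero. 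Non-degeneracy is secured by the hypothesis $E_r^\infty(\rho_0) > 0$ together with reversibility: every $\sigma \notin \useless$ is interconvertible with $\rho_0$ at a finite rate, and the inequality just derived forces $E_r^\infty(\sigma) > 0$. Running the same argument in the reverse direction gives $R(\sigma\to\rho) \leq E_r^\infty(\sigma)/E_r^\infty(\rho)$, and reversibility in the form $R(\rho\to\sigma)\,R(\sigma\to\rho) = 1$ forces both inequalities to saturate, yielding \eqref{eq:relative-rate}.

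The main obstacle is controlling the asymptotic continuity of $E_r$ in the growing-dimension regime, because $S(\rho|\sigma)$ can blow up when $\sigma$ has small eigenvalues, so a naive Fannes-type bound for $S$ does not transfer to $E_r$ for free. The trick is to exploit both hypotheses on $\useless$ simultaneously: the optimal $\sigma\in\useless$ attaining the infimum defining $E_r$ can be mixed with a small amount of $\id/d$, which by convexity of $\useless$ remains in $\useless$ and changes $E_r$ only by $O(\epsilon\log d)$, while keeping the spectrum of the competitor bounded away from zero so that the relative entropy becomes robust under small trace-norm perturbations. Making this estimate quantitative, so that the error is genuinely $o(n)$ against the linear-in-$n$ scaling of $E_r(\sigma^{\otimes m})$, is where the technical work lies; this is precisely why the theorem requires both convexity of $\useless$ and the presence of the maximally mixed state.
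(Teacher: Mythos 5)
Your proposal is correct and follows essentially the same route as the paper: monotonicity of $E_r$ plus asymptotic continuity yields $E_r^\infty(\rho)\geq R(\rho\to\sigma)E_r^\infty(\sigma)$, and reversibility together with the nondegeneracy hypothesis saturates the pair of inequalities. The only difference is that you unpack the two ingredients the paper delegates to citations (the rate inequality to Theorem 4 of the earlier limits paper, and the asymptotic continuity of $E_r$ for a convex set containing $\id/d$ to Synak-Radtke--Horodecki), and your sketch of the latter is the standard argument.
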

In reversible theories one can usually choose a reference state $\rho_0$,  
for which $E_r(\rho_0)=1$. Then, the theorem alternatively says that for any state $\rho$ 
we have 
\be
R(\rho\to \rho_0)= E_r(\rho)
\ee
The theorem shows that the relative entropy distance is a distinguished measure of resource as it is unique 
measure in reversible theories.  
{\bf Remarks:} As we will discuss at the end of this Section, a stronger version of the theorem removes the
necessity for the set $\useless$ to contain the maximally mixed state.  It is proven in the Appendix. 
As was noted in \cite{Gour-rel-frame2009}, the theorem needs to assume that $E_r^\infty$ is nonvanishing on resource states,
something which is not satisfied in asymmetry theory. 

Three examples of reversible theories illustrate the utility of the above theorem: purity theory, 
the much more involved example - entanglement theory with a special class of operations, and thermodynamics:
\bei
\item {\bf Purity theory} \cite{uniqueinfo}: 
\be
R(\rho\to |0\>\<0|) = \log d -S(\rho) 
\ee
where $|0\>\<0|$  is a pure state of a qubit.  The above quantity is precisely equal to $E_r(\rho|I/d)$


\item {\bf Entanglement theory} \cite{BrandaoPlenio2007-separable-maps}
\be
R(\rho\to\psi_+)=E_r^\infty(\rho)
\ee
where $E_r$ is the regularized relative entropy distance to the set of separable states, 
and the class of operations is the set of all $\ep$-non-entangling maps. (Ideally one would like to have here 
the set of non-entangling maps, i.e. maps that do not produce entanglement, however due to some technical difficulties 
the authors used maps that can produce $\epsilon$ amount of entanglement, provided, one takes eventually a limit $\ep\to 0$).

\item {\bf Thermodynamics} (of states commuting with a Hamiltonian $H$): \cite{thermoiid}
\be
R(\rho\to |E\>\<E|)= \frac{F(\rho)- F(\rho_\beta)}{F(E)- F(\sigma_\beta)} = \frac{E_r(\rho||\rho_\beta)}{E_r(|E\>\<E|||\sigma_\beta)}
\ee
where $\rho_\beta,\sigma_\beta $ are the  Gibbs states of the Hamiltonian of the initial and final system 
respectively, and the Free Energy is given by $F= \tr H\rho-TS(\rho)$ with $T$ the temperature -- the free energy is just the relative
entropy distance to the thermal state \cite{donald1987free}.  Amusingly, in the single shot case, both the work obtainable from 
a state and the work required to perform the reverse procedure, are given by generalisations of the relative entropy distance
to the thermal state~\cite{horodecki-singleshot}. 
\eei

The first two examples fit directly into this theorem, while the connection between the third - thermodynamics 
and our theorem we will discuss later. But, first let us note 
an interesting twist here. Usually, one defines a resource theory by choosing a class $\classops$ of operations. Most often this naturally induces 
a set $\useless$ of states which could be created under that class from some initially chosen state. 
One can now define a generally larger and more powerful 
class of operations, an expanded class $\classops'$, which are all the operations which when given states from $\useless$, do not create any states outside $\useless$.
For example, in entanglement theory, the class of LOCC induces the class of separable states which can be created under that class.  And this in turn
induces another class of operations, namely, all non-entangling operations which map separable states to separable states. It turns out that the above result on reversibility of 
entanglement theory, can be stated much more generally (it was announced in \cite{BrandaoPlenio2007-separable-maps} and proved in  \cite{Fernando-unpublished-reversibility}): 

{\bf General reversibility of resource theories:}
{\it Consider a resource theory, for which any resource state possesses 
the so called exponential distinguishing property relative to the set $\useless$.
Such theory is reversible  under the class of operations, that asymptotically do not create resource states from  $\useless$. The interconversion rate is determined by the relative entropy distance from $\useless$
as in \eqref{eq:relative-rate}.}

Let us finally discuss thermodynamics. This theory does not satisfy one condition, namely,
the set $\useless$ consists only of the Gibbs state, and therefore does not contain the maximally mixed state. 
However, one can construct a version of the theorem, 
which would sound more technical, yet covers this case too. 
Namely, the maximally mixed state was needed to ensure that for any state, $E_r$ 
is bounded by $\log d$ times some constant, which allows one to prove asymptotic continuity 
of $E_r$.  Now from the proof it follows that the only thing we  need is that $E_r$ is asymptotically continuous  on many copies. I.e., the rising dimension comes only from the fact that the number of copies of the system increases. 
The state of many copies of the system may be global, but the Hamiltonian is separate 
for each copy. Therefore the single free state is a tensor product of many copies of the Gibbs states 
of a fixed dimension. In such case, the required boundedness of $E_r$ follows from subadditivity of entropy.

\section{Deficit, discord, and relative entropy of quantumness}
\label{sec:deficit}

Let us now review the notion of quantumness of correlations, and the deficit.  We  will then, in Section \ref{sec:def-resource},
relate this to the resource theory of purity.
The quantum deficit and quantum discord are closely related quantities, though they have risen from two separate philosophies. 
Quantum discord was designed to distinguish between quantum correlations and classical correlations. 
Quantum deficit was introduced to quantify the amount of thermodynamical work which is lost due to the separation of  two subsystems. It turned out, that this loss is related not to entanglement but to
quantumness of correlations, and in this way, it become a measure of quantumness.

{\it Basic versions.}
It is easiest to compare the one-way quantum deficit with the quantum discord. The quantum one-way deficit $\Delta^\rightarrow$
is the minimal difference of entropies between the original state, and a state subjected to 
a complete von Neumann measurement on the first subsystem, while the quantum discord $\delta$ (if we optimise the quantity over measurements) is the minimal difference of mutual informations 
under the same conditions:
\ben
&\delta(\rho_{AB})=  \min_{{P_i}}\left(I_M(\rho_{AB})-I_M(\rho'_{AB})\right) \nonumber \\
&  \Delta^\to(\rho_{AB})= \min_{{P_i}}\left(S(\rho'_{AB}) - S(\rho_{AB})\right)
\label{eq:def-disc}
\een
where the minimum is taken over all von Neumann measurements  $\{P_i\}$ with $P_i$ rank one projectors,
and 
\be
\rho'_{AB}=\sum_i P_i \ot I \rho_{AB} P_i \ot I.
\ee
If one writes explicitly the mutual information in terms of entropies, one realizes
that the difference between those two quantities is due to entropy production on Alice's side, 
which is counted in the case of deficit, but not in the case of discord. 
\ben
& \delta(\{P_i\}) = [S(\rho_{AB}')-S(\rho_{AB}) ] - [S(\rho_A')-S(\rho_A)]\nonumber\\
& \Delta^\to(\{P_i\}) = [S(\rho_{AB}')-S(\rho_{AB}) ]
\een
where $\delta(\{P_i\}), \Delta^\to(\{P_i\})$ are defined as in \eqref{eq:def-disc}
but for a given measurement (i.e.  without minimization over measurements). 

{\it Remark.} Both quantities may depend not only on the state itself, but also 
on the Hilbert space on which it is supported. 

{\it Extended versions.}
One can allow more general versions of both quantities as follows. 
For quantum discord, one may take infimum over POVM's rather than over von Neumann measurements:
\be
\delta_{gen}= I_M(\rho_{AB}) - I_{cl}(\rho_{AB})
\label{eq:gen-disc}
\ee
where 
\be
I_{cl}(\rho_{AB})=\max_{\{A_i\}} I_M(\sum_i p_i |i\>\<i|\ot \rho^i_B)
\ee
where $\rho^i_B=\tr_A (A_i \ot I \rho_{AB} A_i \ot I)$ are Bob's postmeasurement states,
and $p_i=\tr A_i^\dagger A_i \rho_{AB}$ are the probabilities of Alice's outcomes.

Regarding the quantum deficit, we cannot use POVM's, as they hide the entropy production in the reference system
used to implement the von Neumann measurement. Rather, we embed the state into a larger system on Alice side
(adding some extra dimensions), and compute the deficit on such a state. 
\be
\Delta_{gen}^\to(\rho_{AB})= \inf_{V} \Delta^\to(\tilde\rho_{A'B})
\ee
where $\tilde\rho_{AB} =V\ot I_B \rho_{A'B} V^\dagger \ot I$, where $V$ is a local isometry $V:\hcal_A\to \hcal _A'$.

One can find, that the definition of generalized discord  of \eqref{eq:gen-disc} can be equivalently written
in the same way:
\be
\delta_{gen}(\rho_{AB})=\min_V \delta(\rho_{A'B})
\ee
where again the infimum is taken over all local isometries $V$.

{\it Regularizaton: one-way deficit meets discord.}
One may define the regularization of a function $f$ as 
\be
f^\infty(\rho)=\lim_{n\to \infty} \frac1n f(\rho^{\ot n})
\ee
whenever the limit exists. It turns out that after regularization, 
the discord optimised over measurements and the one-way deficit become the same, as proven by Igor Devetak \cite{igor-deficit}.
Moreover, the difference between the generalized quantities and the basic versions disappears in such limit:
\be
\delta_{gen}^\infty=\delta^\infty=(\Delta^{\to})^\infty=(\Delta_{gen}^{\to})^\infty
\ee

{\it Zero way and two way deficits.}
The quantum deficits are defined operationally, i.e. we count the amount of work 
that can be obtained by means of a state, and classical communication. 
The one-way deficit
is related to the amount of work when one-way classical communication is allowed. 
Analogously, the zero-way deficit is defined when communication is only allowed after both parties perform measurements,
and two-way deficit - when two-way classical communication is allowed. 

As pointed out in \cite{huge-delta} it is easy to cook up a zero-way version of discord:
\be
\delta^\emptyset(\rho)= \inf_{\{P_i\},\{Q_j\}} \left(I_M(\rho_{AB}) - I_M(\rho'_{AB})\right)
\ee
where $\rho'_{AB}= \sum_{ij} P_i\ot Q_j  \rho  P_i\ot Q_j $,
and $\{P_i\},\{Q_j\}$ are complete von Neumann measurements, $P_i$ and $Q_j$ 
are rank one projectors. However it is an open question how to define a sensible notion of two-way discord.   

{\it Deficit and  relative entropy distance.}
In \cite{huge-delta} the one-way and zero-way discord have been shown to be equal to the relative entropy distance 
to two classes of states: so called c-q and c-c states, respectively.  
The c-q states are all states of the form 
\be
\sum_i p_i |\psi_i\>\<\psi_i| \ot \rho_i
\ee
where $\psi_i$ are orthonormal states. I.e. the first subsystem is classical. 
The c-c state is of the form 
\be
\sum_{ij} p_{ij} |\psi\>\<\psi_i| \ot |\phi_j\>\<\phi_j|
\ee
where $p_{ij}$  is a probability distribution, and $\{\psi_i\}$, $\{\phi_j\}$ are 
orthonormal bases. I.e. both systems are classical.

There is a long standing open problem, of whether one can represent two-way quantum deficit in this way.
The expected class of states is the one called pseudo-classically correlated states, 
see \cite{huge-delta} for a definition. The problem is roughly, whether in the process of making a state 
c-c with minimal entropy by local dephasing (i.e. measurements without readout) and communication, Alice (Bob) 
can restrict themselves to only making finer and finer measurements at any round, 
or whether at some round, Alice (Bob) will need to make a measurement, which does not commute
with the previous measurements.

\section{Relation between purity theory and deficit}
\label{sec:def-resource}

The theory of quantumness is not, at least at the present stage of development, 
a resource theory. The main problem is that we do not have here a class of operations, 
that would not increase the quantumness. Quantumness can be created by local operations and 
classical communication, and even by local operations, by simply tracing out a system 
containing orthogonal flags correlated with nonorthogonal states \cite{Dagmar-quantumness-increase,Ciccarello-discord-increase}.

However, at least the quantum deficit is strictly related to a resource theory  -
the theory of {\it local} purity \cite{nlocc}.  The key element we need is that because quantumness of correlations requires
 distinguishing classically correlated states from other sorts of correlations, the resource theory should expose this
distinction.  Local purity theory is a non-reversible theory,
which arises, when we restrict the class of operations in purity theory. 
Namely we consider a class of noisy-LOCC operations (NLOCC) consisting of 
\bei
\item local noisy operations (NO) as described in Section \ref{sec:purity}
\item sending qubits via dephasing channels
\eei
i.e. Alice and Bob can each perform NO operations locally, and between them, they have a dephasing channel (which is a channel
which formalises the notion of classical communication).

As with purity theory, we are interested in state transitions in general, but in particular, the number of pure states that can be extracted
from a state.  Here, we are interested in extracting the purity locally (this is in a very real sense, complementary\cite{compl} to extracting
non-local entanglement).  We thus define the distillable local purity  as the maximal rate of qubits in a pure product state  that can be obtained 
from many copies of a given state $\rho_{AB}$. Let us call the quantity $I_l^{NLOCC}$ which can be called also the {\it localisable information}, since purity can be treated 
as information \cite{nlocc}). The subscript $l$ stands for 'localisable'.

The distillable purity depends on the Hilbert space, since embedding a state into 
a larger Hilbert space is the same as increasing local purity.
For example, an embedding is $\rho_{AB} \to |0\>_{A'}\<0| \ot \rho_{AB}$ 
where clearly the distillable local purity is increased at least by 1 (due to one additional pure local qubit).

One can also consider the same scenario, yet with a slightly different class of operations, called closed-LOCC operations (CLOCC) instead of noisy-LOCC operations.
The important thing these two classes of operations have in common is that they don't allow pure states for free, and thus low entropy becomes a resource. 
Namely, closed-LOCC operations are given by:
\bei
\item local unitary transformations
\item local partial trace
\item  sending qubits via dephasing channels
\eei
The resulting distillable local purity will be denoted by $I_l^{CLOCC}$.

The class of operations NLOCC is more tractable technically than CLOCC.
The class CLOCC does not allow the addition of any ancillas whatsoever, which automatically
imposes non-creation of purity. The NLOCC class allows for preparation of maximally mixed states only. 
It is not known, whether the distillable local purity under CLOCC is equal to that under NLOCC, but we conjecture that it is so. 

A one-way version of distillable local purity denoted by $I_l^{CLOCC,\to}$  or $I_l^{NLOCC,\to}$
is when one allows only one way communication, say from Alice to Bob. Then we ask about the rate of 
purity on Bob's side (even if Alice will obtain some qubits on her side, she can send them for free
to Bob down the dephasing channel).   

The quantities $I_l$ are monotones under CLOCC/NLOCC, which is a feature needed 
in a reasonable resource theory.  $I_l$ is therefore a measure of local purity.

Now, the relation between the quantum deficit $\Delta$ and distillable local purity by CLOCC operations is straightforward.
Namely 
\ben
&\Delta^\infty(\rho_{AB}) = N-S(\rho_{AB}) - I_l^{CLOCC}\\ \nonumber
&(\Delta^\to)^\infty(\rho_{AB}) = N-S(\rho_{AB}) - I_l^{CLOCC,\to}\\ \nonumber  
\een
where $N=\log (|A||B|)$ is the total number of qubits, and the $\infty$ superscript denotes the fact that we compute these quantities in the limit of many copies.
The deficit can thus be interpreted in terms of local purity: 
it is the difference between the local purity that can be obtained, 
if quantum communication is allowed (given by $N-S$)
and the local purity that can be obtained, 
if communication is restricted to the dephasing channel (one- or two-way).  Note that comparison between the amount of purity which can be distilled
under CLOCC and under closed local operations, allow one to define a measure for classical correlations~\cite{compl,huge-delta}.
We thus see that the deficit, and hence quantumness, is very closely related to the resource theory of local purity.

\section{Conclusion and open problems}

We have discussed resource theories in general, as well as a number of examples.  In particular, we have discussed the notion of quantumness of correlations 
in the context of resource theories.  This provides a natural setting in which the discord, deficit and other related quantities find an operational
meaning.  There are many open questions, some of which were raised in \cite{huge-delta}.  For example, it would be interesting to know
 whether one can represent the two-way quantum deficit as a relative entropy distance.  This relates to the problem of whether, when extracting local purity,
all the operations should commute or not.  Another open problem is to prove (or disprove) the conjecture that 
the distillable local purity under CLOCC is equal to that under NLOCC. An interesting open question 
concerns the relation between the discord and deficit. 
Originally, the discord was a quantity related to one-way communication, 
and its natural analogue arising in local purity theory was the one-way deficit. 
It was easy to define the zero-way discord - as a counterpart to the zero-way deficit.
However so far there is no two-way version of discord that would correspond to the two-way deficit.

There are also a number of interesting questions related to resource theories in general.  For example, we know some situations where the relative entropy
distance is the unique measure of the resource, but it would be interesting to find other conditions when this is so.  It would also be interesting
to find more examples of resource theories, so that we can compare the different structures.  For example, resource theories where there are many bound
sets are of interest, as are hybrid resource theories, which can be constructed by combining resource theories.  Finding commonality across various
resource theories are of interest.  For example, the notion of entanglement spread~\cite{Har09} as a resource and the use
of embezzling states~\cite{ent-embezzling} in communication tasks is
virtually identical to the use of reference frames which allow one to lift a superselection rule in asymmetry theory~\cite{thermoiid}.
We believe that future study of resource theories will lead to general results with broad applicability.

{\bf Acknowledgements.}
MH thanks the support of the Polish Ministry of Science and Higher Education
Grant no. IdP2011 000361. JO is grateful to the Royal Society for their support.  Part of this work was done at National
Quantum Information Centre of Gdansk.

\section{Appendix}
Here we give formal definition of several notions, and provide a proof for theorem \ref{thm:hyperset}.
\begin{definition}[Asymptotically continuous function.]
A function  $f$  is asymptotically continuous  if for family of Hilbert spaces $\hcal_n$ 
with $\dim(\hcal_n)\to \infty$ and states $\rho_n,\sigma_n$ such that $||\rho_n-\sigma_n||_1\to 0$ 
we have 
\be
\lim_{n\to \infty}\frac{|f(\rho_n)-f(\sigma_n)|}{\dim\hcal_n}=0
\ee
\end{definition}

\begin{definition}[Asymptotic rate of transition.]
Asymptotic rate of transition between two states $\rho$ and $\sigma$ 
is defined as follows. Suppose that there exists a protocol $\pcal$ i.e. 
a sequence $m_n$, and a sequence of operations $\pcal=\{\Lambda_n\}$ such that 
\be
\lim_{n\to \infty} ||\Lambda_n(\rho^{\ot n}- \sigma^{\ot m_n}||_1 =0.
\ee
Then we define rate of the protocol as 
\be
R(\pcal)=\limsup\frac{m_n}{n}.
\ee
The asymptotic rate of transition is then given  by
\be
R(\rho\to \sigma)=\sup_\pcal R(\pcal).
\ee
\end{definition}

\begin{definition}[Asymptotically reversible resource theory.]
A theory given by a class of operations $\classops$ (subset of completely positive maps,
and a set of states $\useless$ closed under the class, 
is reversible, when for any two states $\rho$ and $\sigma$ outside of 
the class we have  $0<R(\rho\to \sigma)<\infty$ and 
\be
R(\rho\to \sigma)R(\sigma\to \rho)=1.
\ee
 \end{definition}
\begin{proof}[Proof of theorem \ref{thm:hyperset}]
As proved in  \cite{Michal2001} (Theorem 4) if a function $f$ is asymptotically continuous, 
monotonous under class of operations,
its regularization exists, and $R(\rho\to \sigma)<\infty$ then 
\be
f^\infty(\sigma) R(\rho\to \sigma) \leq f^\infty(\rho).
\ee
Let us take now a state $\rho_0$ for which $f(\rho_0)>0$.
We then  have 
\be
\frac {f^\infty(\rho)}{f^\infty(\rho_0)}\leq R(\rho\to \rho_0), 
\label{eq:mon-rate}
\ee
and also 
\be
f^\infty(\rho)R(\rho_0\to \rho)\leq f^\infty(\rho_0).
\ee
Multiplying the latter with $R(\rho\to \rho_0)$ and using reversibility, 
we obtain  an inequality converse to \eqref{eq:mon-rate},
which gives 
\be
\frac {f^\infty(\rho)}{f^\infty(\rho_0)}= R(\rho\to \rho_0).
\ee
This implies that $f^\infty(\rho)>0$ for any state outside of $\useless$.
Applying now the above equality to state $\sigma $ instead of $\rho$ 
multiplying the equalities, and using again reversibility, we obtain 
\be
R(\rho\to \sigma)=\frac{f^\infty(\rho)}{f^\infty(\sigma)}.
\ee 
Finally, note that, as proven in \cite{Synak05-asym}, under the above assumptions on 
the set $\useless$, relative entropy distance $E_r$ from this set is asymptotically continuous. 
Therefore we obtain 
\be
R(\rho\to \sigma)=\frac{E_r^\infty(\rho)}{E_r^\infty(\sigma)}.
\ee 
\end{proof}

\bibliography{ref-resource,../refjono2,../refjono,../rmp12,../work,../refthermo}

\end{document}